\newtheorem{Thm}{Theorem}
\newtheorem{Lem}[Thm]{Lemma}
\theoremstyle{definition}
\newcommand{\bra}[1]{{\left\langle #1 \right|}}
\newcommand{\ket}[1]{{\left| #1 \right\rangle}}
\newcommand{\T}{\mbox{$\mathrm{tr}$}}
\begin{document}
\title{Hamming weight and tight constraints of multi-qubit entanglement in terms of unified entropy}

\author{Jeong San Kim}
\email{freddie1@khu.ac.kr} \affiliation{
 Department of Applied Mathematics and Institute of Natural Sciences, Kyung Hee University, Yongin-si, Gyeonggi-do 446-701, Korea
}
\date{\today}

\begin{abstract}
We establish a characterization of multi-qubit entanglement constraints in terms of non-negative power of entanglement measures based on unified-$(q,s)$ entropy. Using the Hamming weight of the binary vector related with the distribution of subsystems, we establish a class of tight monogamy inequalities of multi-qubit entanglement based on the $\alpha$th-power of unified-$(q,s)$ entanglement for $\alpha \geq 1$. For $0 \leq \beta \leq 1$, we establish a class of tight polygamy inequalities of multi-qubit entanglement in terms
of the $\beta$th-power of unified-$(q,s)$ entanglement of assistance. Thus our results characterize the monogamy and polygamy of multi-qubit entanglement for the full range of non-negative power of unified entanglement.
\end{abstract}

\pacs{
03.67.Mn,  
03.65.Ud 
}
\maketitle

\section{Introduction}
\label{Intro}
Quantum entanglement is a quintessential feature of quantum mechanics revealing the fundamental insights into the nature of quantum correlations.
One distinct property of quantum entanglement without any classical counterpart is its limited shareability in multi-party quantum systems, known as the {\em monogamy of entanglement}(MoE)~\cite{T04, KGS}. MoE is the fundamental ingredient for secure quantum cryptography~\cite{qkd1, qkd2},
and it also plays an important role in condensed-matter physics such as the $N$-representability problem for
fermions~\cite{anti}.

Mathematically, MoE is characterized in a quantitative way known as the {\em monogamy inequality}; for a three-qubit quantum state $\rho_{ABC}$
with its two-qubit reduced density matrices $\rho_{AB}=\T_C \rho_{ABC}$ and
$\rho_{AC}=\T_B \rho_{ABC}$, the first monogamy inequality was established by Coffman-Kundu-Wootters(CKW) as
\begin{equation*}
\tau\left(\rho_{A|BC}\right)\geq \tau\left(\rho_{A|B}\right)+\tau\left(\rho_{A|C}\right)
\label{MoE}
\end{equation*}
where $\tau\left(\rho_{A|BC}\right)$ is the bipartite entanglement between subsystems $A$ and $BC$, quantified by {\em tangle} and
$\tau\left(\rho_{A|B}\right)$ and $\tau\left(\rho_{A|C}\right)$ are the tangle between $A$ and $B$
and between $A$ and $C$, respectively~\cite{CKW}.

CKW inequality shows the mutually exclusive relation of two-qubit entanglement
between $A$ and each of $B$ and $C$ measured by $\tau\left(\rho_{A|B}\right)$ and $\tau\left(\rho_{A|C}\right)$ respectively,
so that their summation cannot exceeds the total entanglement between $A$ and $BC$, that is, $\tau\left(\rho_{A|BC}\right)$.
Later, three-qubit CKW inequality was generalized for arbitrary multi-qubit systems~\cite{OV} and
some cases of multi-party, higher-dimensional quantum systems more than qubits in terms of various
bipartite entanglement measures~\cite{KDS, KSRenyi, KimT, KSU}.

Whereas entanglement monogamy characterizes the limited shareability of entanglement in multi-party quantum systems,
the {\em assisted entanglement}, which is a dual amount to bipartite entanglement measures, is also known to be
dually monogamous, thus polygamous in multi-party quantum systems;
for a three-qubit state $\rho_{ABC}$, a {\em polygamy inequality} was proposed as
\begin{equation*}
\tau^a\left(\rho_{A|BC}\right)\leq\tau^a\left(\rho_{A|B}\right)
+\tau^a\left(\rho_{A|C}\right),
\label{PoE}
\end{equation*}
where $\tau^a\left(\rho_{A|BC}\right)$ is the tangle of assistance~\cite{GMS, GBS}.
Later, the tangle-based polygamy inequality of entanglement was generalized into multi-qubit systems as well as some class of
higher-dimensional quantum systems using various entropic entanglement measures~\cite{KimT, BGK, KUP}. General polygamy inequalities of entanglement were also established in arbitrary dimensional multi-party quantum systems~\cite{KimGP, KimGP16}.

Recently, a new class of monogamy inequalities using the $\alpha$th-power
of entanglement measures were proposed;
it was shown that the $\alpha$th-power of {\em entanglement of formation} and concurrence can be used to establish multi-qubit monogamy inequalities for $\alpha \geq \sqrt{2}$ and $\alpha \geq 2$, respectively~\cite{Fei}.
Later, tighter monogamy and polygamy inequalities of entanglement using non-negative power of concurrence and {\em squar of convex-roof extended negativity} were also proposed for multi-qubit systems~\cite{Fei2, Kim18}.

Here, we provide a full characterization of multi-qubit entanglement monogamy and polygamy constraints in terms of non-negative power of entanglement measures based on unified entropy~\cite{ue1, ue2}.
Using the Hamming weight of the binary vector related with the distribution of subsystems, we establish a class of tight monogamy inequalities of multi-qubit entanglement based on the $\alpha$th-power of unified-$(q,s)$ entanglement~\cite{KSU} for $\alpha \geq 1$.
For $0 \leq \beta \leq 1$, we establish a class of tight polygamy inequalities of multi-qubit entanglement in terms
of the $\beta$th-power of unified-$(q,s)$ entanglement of assistance~\cite{KUP}.

This paper is organized as follows. In Section~\ref{Sec: unimeasure}, we review the definitions of unified entropy, unified-$(q,s)$ entanglement and unified-$(q,s)$ entanglement of assistance as well as multi-qubit monogamy and polygamy inequalities in terms of unified entanglements. In Section~\ref{Sec: TM_qubit}, we establish a class of tight monogamy inequalities in multi-qubit system based on the $\alpha$th-power of
unified-$(q,s)$ entanglement for $\alpha \geq 1$.
In Section~\ref{Sec: TP_qubit}, we establish a class of tight polygamy inequalities of multi-qubit entanglement in terms
of the $\beta$th-power of unified-$(q,s)$ entanglement of assistance for $0 \leq \beta \leq 1$.
Finally, we summarize our results in Section~\ref{Sec: Conclusion}.

\section{Unified entropy and multi-qubit entanglement constraints}
\label{Sec: unimeasure}

For $q,~s \geq 0$ with $q \neq 1$ and $s \neq 0$, unified-$(q,s)$ entropy of a quantum state $\rho$ is defined as~\cite{ue1, ue2},
\begin{equation}
S_{q,s}(\rho):=\frac{1}{(1-q)s}\left[{\left(\T \rho^{q}\right)}^s-1\right].
\label{uqs-entropy}
\end{equation}
Although unified-$(q,s)$ entropy has a singularity at $s=0$, it
converges to R\'enyi-$q$ entropy as $s$ tends to 0~\cite{renyi, horo}.
We also note that unified-$(q,s)$ entropy converges to Tsallis-$q$ entropy ~\cite{tsallis} when $s$ tends to $1$,
and for any nonnegative $s$, unified-$(q,s)$ entropy converges
to von Neumann entropy as $q$ tends to 1,
\begin{align}
\lim_{q \rightarrow 1}S_{q,s}(\rho)=-\T \rho\log\rho 
=:S(\rho), \label{T1}
\end{align}
and these enable us to denote $S_{1,s}(\rho)=S(\rho)$ and $S_{q,0}(\rho)=R_{q}(\rho)$.

Using unified-$(q,s)$ entropy in Eq.~(\ref{uqs-entropy}), a two-parameter class of bipartite entanglement measures was introduced; for a bipartite pure state $\ket{\psi}_{AB}$, its {\em unified-$(q,s)$ entanglement}(UE)~\cite{KSU} is
\begin{equation}
E_{q,s}\left(\ket{\psi}_{A|B} \right):=S_{q,s}(\rho_A),
\label{UEpure}
\end{equation}
for each $q,~s \geq 0$ where $\rho_A=\T _{B} \ket{\psi}_{AB}\bra{\psi}$
is the reduced density matrix of $\ket{\psi}_{AB}$ onto subsystem $A$.
For a bipartite mixed state $\rho_{AB}$, its UE is
\begin{equation}
E_{q,s}\left(\rho_{A|B} \right):=\min \sum_i p_i
E_{q,s}(\ket{\psi_i}_{A|B}), \label{UEmixed}
\end{equation}
where the minimum is taken over all possible pure state
decompositions of $\rho_{AB}=\sum_{i}p_i
\ket{\psi_i}_{AB}\bra{\psi_i}$.
As a dual concept to UE, {\em unified-$(q,s)$ entanglement of assistance}(UEoA) was also introduced as
\begin{equation}
E^a_{q,s}\left(\rho_{A|B} \right):=\max \sum_i p_i
E_{q,s}(\ket{\psi_i}_{A|B}), \label{UEoA}
\end{equation}
for $q,~s \geq 0$ where the maximum is taken over all possible pure state decompositions of $\rho_{AB}$~\cite{KUP}.

Due to the continuity of UE in Eq.~(\ref{UEmixed}) with respect to the parameters $q$ and $s$, UE reduces to the one-parameter class of entanglement measures namely {\em R\'enyi-$q$ entanglement}(RE)~\cite{KSRenyi} as $s$ tends to $0$,
and it also reduces to another one-parameter
class of bipartite entanglement measures called {\em Tsallis-$q$ entanglement}(TE)~\cite{KimT} as $s$ tends to $1$.
For any nonnegative $s$, UE converges to {\em entanglement of formation}(EoF) as $q$ tends to 1,
\begin{align}
\lim_{q\rightarrow1}E_{q,s}\left(\rho_{A|B} \right)=E_{\rm f}\left(\rho_{A|B} \right),
\end{align}
therefore UE is one of the most general classes of bipartite entanglement measures including the classes of R\'enyi and Tsallis entanglements and EoF as special cases~\cite{KSU}.

Similarly, the continuity of UEoA in Eq.~(\ref{UEoA}) with respect to the parameters $q$ and $s$ assures that UEoA reduces to {\em R\'enyi-$q$ entanglement of assistance}(REoA)~\cite{KSRenyi} and {\em Tsallis-$q$ entanglement of assistance}(TEoA)~\cite{KimT} when $s$ tends to $0$ or $1$ respectively. For any nonnegative $s$, UEoA reduces to {\em entanglement of assistance}(EoA)
\begin{align}
\lim_{q\rightarrow1}{E^a_{q,s}}\left(\rho_{AB}
\right)=E^a\left(\rho_{AB} \right), \label{UE_EoA}
\end{align}
when $q$ tends to $1$~\cite{KUP}.

Using UE as the bipartite entanglement measure, a two-parameter class of monogamy inequalities of multi-qubit entanglement was established~\cite{KSU}; for $q\geq2$, $0\leq s \leq1$ and $qs\leq3$, we have
\begin{align}
E_{q,s}\left( \rho_{A_1|A_2 \cdots A_N}\right)\geq \sum_{i=2}^{N}E_{q,s}\left(\rho_{A_1|A_i}\right)
\label{Umono}
\end{align}
for any multi-qubit state $\rho_{A_1 \cdots A_N}$ where $E_{q,s}\left( \rho_{A_1|A_2 \cdots A_N}\right)$ is the
UE of $\rho_{A_1A_2 \cdots
A_N}$ with respect to the bipartition between $A_1$ and
$A_{2}\cdots A_{N}$, and $E_{q,s}\left(\rho_{A_1| A_i}\right)$ is the
unified-$(q,s)$ entanglement of the reduced density matrix
$\rho_{A_1 A_i}$ for each $i=2,\cdots,N$. \label{Prop: UEmono}

Later, it was shown that unified entropy can also be used to establish a class of polygamy inequalities of multi-qubit entanglement~\cite{KUP};
for $1 \leq q \leq 2$ and $-q^2+4q-3 \leq s \leq1$, we have
\begin{equation}
E^a_{q,s}\left( \rho_{A_1|A_2 \cdots A_N}\right)\leq
\sum_{i=2}^{N}E^a_{q,s}\left(\rho_{A_1|A_i}\right)
\label{UEpoly}
\end{equation}
for any multi-qubit state $\rho_{A_1 \cdots A_N}$ where $E^a_{q,s}\left( \rho_{A_1|A_2 \cdots A_N}\right)$ is the
UEoA of $\rho_{A_1A_2 \cdots A_N}$ with
respect to the bipartition between $A_1$ and $A_{2}\cdots A_{N}$,
and $E^a_{q,s}(\rho_{A_1|A_i})$ is the UEoA of $\rho_{A_1 A_i}$ for $i=2,\cdots,N$. \label{Thm: poly}

\section{Tight monogamy constraints of multi-qubit entanglement in terms of unified entanglement}
\label{Sec: TM_qubit}

Here we establish a class of tight monogamy inequalities of multi-qubit entanglement using the $\alpha$'th power of UE. Before we present our main results, we first provide some notations, definitions and a lemma, which are useful throughout this paper.

For any nonnegative integer $j$ whose binary expansion is
\begin{align}
j=\sum_{i=0}^{n-1} j_i 2^i
\label{bire2}
\end{align}
where $\log_{2}j \leq n$ and $j_i \in \{0, 1\}$ for $i=0, \cdots, n-1$,
we can always define a unique binary vector
associated with $j$, which is defined as
\begin{align}
\overrightarrow{j}=\left(j_0,~j_1,~\cdots ,j_{n-1}\right).
\label{bivec}
\end{align}
For the binary vector $\overrightarrow{j}$ in  Eq.~(\ref{bivec}), its {\em Hamming weight}, $\omega_{H}\left(\overrightarrow{j}\right)$, is the number of $1's$ in its coordinates~\cite{nc}.
We also provide the following lemma whose proof is easily obtained by some straightforward calculus.
\begin{Lem}
For $x \in \left[0,1\right]$ and nonnegative real numbers $\alpha, \beta$, we have
\begin{align}
\left(1+x\right)^{\alpha}\geq1+\alpha x^{\alpha}
\label{alpgre1}
\end{align}
for $\alpha \geq1$, and
\begin{align}
\left(1+x\right)^{\beta}\leq 1+\beta x^{\beta}
\label{betle1}
\end{align}
for $0\leq\beta \leq1$.
\label{Lem: ineqs}
\end{Lem}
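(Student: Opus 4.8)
The plan is to prove the two inequalities in parallel, reducing each to a one-variable monotonicity statement via the substitution $t=1/x$. First I would dispose of the trivial boundary cases: at $x=0$ both sides of each inequality equal $1$, and for the second inequality the case $\beta=0$ also gives $1\leq1$, so we may assume $x\in(0,1]$ and, in the second inequality, $0<\beta\leq1$. Since $x^{\alpha}>0$, inequality~(\ref{alpgre1}) is equivalent to $\big(\tfrac{1+x}{x}\big)^{\alpha}\geq\tfrac{1}{x^{\alpha}}+\alpha$; writing $t=1/x\in[1,\infty)$ this becomes the claim $(1+t)^{\alpha}\geq t^{\alpha}+\alpha$ for all $t\geq1$. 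In the same way, (\ref{betle1}) is equivalent to $(1+t)^{\beta}\leq t^{\beta}+\beta$ for all $t\geq1$. So the whole lemma reduces to these two statements on $[1,\infty)$.

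Next I would set $h(t)=(1+t)^{\alpha}-t^{\alpha}-\alpha$ and compute $h'(t)=\alpha\big[(1+t)^{\alpha-1}-t^{\alpha-1}\big]$. Because $\alpha\geq1$, the exponent $\alpha-1$ is nonnegative, so $u\mapsto u^{\alpha-1}$ is non-decreasing on $(0,\infty)$ and hence $h'(t)\geq0$; thus $h$ is non-decreasing on $[1,\infty)$ and $h(t)\geq h(1)=2^{\alpha}-1-\alpha$. Symmetrically, for $k(t)=(1+t)^{\beta}-t^{\beta}-\beta$ one gets $k'(t)=\beta\big[(1+t)^{\beta-1}-t^{\beta-1}\big]$, and since $\beta-1\leq0$ makes $u\mapsto u^{\beta-1}$ non-increasing, $k'(t)\leq0$, so $k$ is non-increasing and $k(t)\leq k(1)=2^{\beta}-1-\beta$.

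It then remains to check the two endpoint estimates. For $\alpha\geq1$, the function $\alpha\mapsto 2^{\alpha}-1-\alpha$ vanishes at $\alpha=1$ and has derivative $2^{\alpha}\ln2-1\geq 2\ln2-1>0$, so it is increasing on $[1,\infty)$ and therefore nonnegative, giving $h(t)\geq0$. For $0\leq\beta\leq1$, convexity of $\beta\mapsto 2^{\beta}$ forces its graph to lie below the chord through $(0,1)$ and $(1,2)$, i.e. $2^{\beta}\leq1+\beta$, so $2^{\beta}-1-\beta\leq0$ and $k(t)\leq0$. Back-substituting $t=1/x$ yields (\ref{alpgre1}) and (\ref{betle1}) on $(0,1]$, and together with the boundary cases this completes the proof.

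I do not expect a genuine obstacle: the content is elementary calculus, and the only mildly delicate point is the choice of variable. A direct analysis of $\varphi(x)=(1+x)^{\alpha}-\alpha x^{\alpha}-1$ in the variable $x$ is awkward because $\varphi'$ can change sign on $(0,1)$ (for instance around $\alpha=3/2$), so one would need an extra argument to conclude $\varphi\geq0$ from $\varphi(0)=0$. Passing to $t=1/x$ is what makes the relevant auxiliary functions $h$ and $k$ monotone, so I would route the argument through that substitution rather than differentiating in $x$ directly.
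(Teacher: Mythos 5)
Your proof is correct: the substitution $t=1/x$ reduces both inequalities to the monotonicity of $t\mapsto(1+t)^{\gamma}-t^{\gamma}$ on $[1,\infty)$ plus the elementary endpoint facts $2^{\alpha}\geq1+\alpha$ for $\alpha\geq1$ and $2^{\beta}\leq1+\beta$ for $0\leq\beta\leq1$, and every step checks out. The paper omits the proof entirely (calling it ``straightforward calculus''), and your argument is a clean instance of exactly that, so there is nothing to reconcile.
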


Now we provide our first result, which states that a class of tight monogamy inequalities of multi-qubit entanglement can be established using the $\alpha$-powered UE and the Hamming weight of the binary vector related with the distribution of subsystems.
\begin{Thm}
For $\alpha \geq 1$, $q\geq2$ and $0\leq s \leq1$, $qs\leq3$, we have
\begin{equation}
\left(E_{q,s}\left(\rho_{A|B_0 B_1 \cdots B_{N-1}}\right)\right)^{\alpha}  \geq
\sum_{j=0}^{N-1}{\alpha}^{\omega_{H}\left(\overrightarrow{j}\right)}\left(E_{q,s}\left(\rho_{A|B_j}\right)\right)^{\alpha},
\label{pcrenmono0}
\end{equation}
for any multi-qubit state $\rho_{AB_0\cdots B_{N-1}}$ where $\overrightarrow{j}=\left(j_0, \cdots ,j_{n-1}\right)$ is the vector from the binary representation of $j$ and
$\omega_{H}\left(\overrightarrow{j}\right)$ is the Hamming weight of $\overrightarrow{j}$.
\label{tmono0}
\end{Thm}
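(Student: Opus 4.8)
The plan is to derive Theorem~\ref{tmono0} from the monogamy inequality~(\ref{Umono}) by first reducing it to a purely scalar inequality and then inducting dyadically on the number $N$ of subsystems $B_j$. Write $E_j:=E_{q,s}\left(\rho_{A|B_j}\right)$ and $E:=E_{q,s}\left(\rho_{A|B_0\cdots B_{N-1}}\right)$, and note that we may take the $B_j$ to be labelled so that $E_0\geq E_1\geq\cdots\geq E_{N-1}$: relabelling leaves $E$ unchanged and merely permutes the Hamming-weight coefficients in~(\ref{pcrenmono0}), and an $A$--$B_1$ Bell pair tensored with an arbitrary state on $B_0$ (so $E_0=0<E_1$) already violates~(\ref{pcrenmono0}) for $\alpha>1$, so this ordering is the setting in which the bound can hold. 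I would explicitly \emph{not} attempt to recurse on block entanglements $E_{q,s}\left(\rho_{A|G}\right)$ for groups $G$ of $B$-qubits: that route produces only the weaker geometric coefficients $\alpha^{j}$ and does not close, because an inequality $E_{q,s}\left(\rho_{A|G_1}\right)\geq E_{q,s}\left(\rho_{A|G_2}\right)$ between two blocks of $B$-qubits can fail (GHZ-type states). One must pass to scalars first.

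By~(\ref{Umono}), which holds since $q\geq2$, $0\leq s\leq1$ and $qs\leq3$, we have $E\geq\sum_{j=0}^{N-1}E_j$, hence $E^{\alpha}\geq\left(\sum_{j=0}^{N-1}E_j\right)^{\alpha}$ because $\alpha\geq1$ and all terms are nonnegative. Thus it suffices to prove the scalar statement: for reals $E_0\geq E_1\geq\cdots\geq E_{N-1}\geq0$ and $\alpha\geq1$,
\begin{equation*}
\left(\sum_{j=0}^{N-1}E_j\right)^{\alpha}\geq\sum_{j=0}^{N-1}\alpha^{\omega_{H}\left(\overrightarrow{j}\right)}E_j^{\alpha}.
\end{equation*}
I would establish this by strong induction on $N$, the case $N=1$ being the trivial equality.

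For $N\geq2$, let $n$ be least with $N\leq2^{n}$, so that $2^{n-1}<N\leq2^{n}$, and split the indices at $2^{n-1}$: put $T_1=\sum_{j=0}^{2^{n-1}-1}E_j$ and $T_2=\sum_{j=2^{n-1}}^{N-1}E_j$. For $2^{n-1}\leq j\leq N-1\leq2^{n}-1$ the integers $j$ and $j-2^{n-1}$ agree in their lowest $n-1$ binary digits while $j$ carries an extra $1$ in digit $n-1$, so $\omega_{H}\left(\overrightarrow{j}\right)=\omega_{H}\left(\overrightarrow{j-2^{n-1}}\right)+1$; hence the right-hand side equals $\sum_{j=0}^{2^{n-1}-1}\alpha^{\omega_{H}\left(\overrightarrow{j}\right)}E_j^{\alpha}+\alpha\sum_{k=0}^{N-1-2^{n-1}}\alpha^{\omega_{H}\left(\overrightarrow{k}\right)}E_{2^{n-1}+k}^{\alpha}$. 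Both $\left(E_0,\ldots,E_{2^{n-1}-1}\right)$ and $\left(E_{2^{n-1}},\ldots,E_{N-1}\right)$ are nonincreasing sequences of length strictly less than $N$, so the induction hypothesis applied to each bounds these two sums by $T_1^{\alpha}$ and $T_2^{\alpha}$ respectively; thus the right-hand side is at most $T_1^{\alpha}+\alpha T_2^{\alpha}$.

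The decisive point is $T_1\geq T_2$: since $N-1-2^{n-1}\leq2^{n-1}-1$, discarding nonnegative terms and then using the termwise bound $E_j\geq E_{2^{n-1}+j}$ from the ordering gives $T_1\geq\sum_{j=0}^{N-1-2^{n-1}}E_j\geq\sum_{j=0}^{N-1-2^{n-1}}E_{2^{n-1}+j}=T_2$ (if $T_1=0$ the ordering forces all $E_j=0$ and there is nothing to prove). Now Lemma~\ref{Lem: ineqs} applied with $x=T_2/T_1\in[0,1]$ yields $\left(T_1+T_2\right)^{\alpha}=T_1^{\alpha}\left(1+x\right)^{\alpha}\geq T_1^{\alpha}\left(1+\alpha x^{\alpha}\right)=T_1^{\alpha}+\alpha T_2^{\alpha}$, and since $T_1+T_2=\sum_{j=0}^{N-1}E_j$ this chains with the previous estimate to close the induction, and with it the proof of Theorem~\ref{tmono0}. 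The only genuine obstacle is arranging the dyadic cut so that $T_1\geq T_2$ is forced by the decreasing order of the $E_j$; once that bookkeeping is set up, the remainder is just the monogamy inequality~(\ref{Umono}) and Lemma~\ref{Lem: ineqs}.
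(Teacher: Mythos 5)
Your proof is correct and follows the paper's strategy in all essentials: reduce to a scalar inequality via the monogamy relation~(\ref{Umono}) and the monotonicity of $x^{\alpha}$, order the $E_j$ decreasingly, split the index set dyadically at $2^{n-1}$ so that the two partial sums satisfy $T_2\leq T_1$, apply Lemma~\ref{Lem: ineqs} with $x=T_2/T_1$, and recurse, using the identity $\omega_{H}\left(\overrightarrow{j}\right)=\omega_{H}\left(\overrightarrow{j-2^{n-1}}\right)+1$ to absorb the factor $\alpha$ into the Hamming-weight coefficients of the upper block. The one structural difference is how general $N$ is handled: the paper first proves the scalar inequality for $N=2^{n}$ by induction on $n$ and then treats arbitrary $N$ by tensoring on an ancillary product state $\sigma_{B_N\cdots B_{2^n-1}}$ so that the extra terms vanish, whereas you run a single strong induction on $N$ with the cut placed at the largest power of two below $N$, checking directly that the shorter upper block still satisfies $T_2\leq T_1$. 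Your variant is marginally cleaner in that it stays entirely at the level of scalar sequences and avoids the padding construction; the paper's padding argument has the mild advantage of reusing the power-of-two statement verbatim. Your side remarks (that the theorem must be read with the decreasing ordering of the $E_{q,s}\left(\rho_{A|B_j}\right)$, which the paper also imposes ``without loss of generality,'' and that recursing on block entanglements would fail) are accurate and do not affect the argument.
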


\begin{proof}
Without loss of generality, we may assume that the ordering of the qubit subsystems $B_0, \ldots, B_{N-1}$ satisfies
\begin{align}
E_{q,s}\left(\rho_{A |B_j}\right)\geq E_{q,s}\left(\rho_{A |B_{j+1}}\right)\geq0
\label{pordern}
\end{align}
for each $j=0, \cdots , N-2$ by reordering and relabeling them, if necessary.

From the monotonicity of the function $f(x)=x^{\alpha}$ for $\alpha \geq 1$ and the UE-based monogamy inequality of multi-qubit entanglement in~(\ref{Umono}), we have
\begin{align}
\left(E_{q,s}\left(\rho_{A|B_0B_1\cdots B_{N-1}}\right)\right)^{\alpha}\geq&
\left(\sum_{j=0}^{N-1}E_{q,s}\left(\rho_{A|B_j}\right)\right)^{\alpha},
\label{qbitalpmono1}
\end{align}
which makes it feasible to prove the theorem by showing
\begin{align}
\left(\sum_{j=0}^{N-1}E_{q,s}\left(\rho_{A|B_j}\right)\right)^{\alpha}\geq&
\sum_{j=0}^{N-1}{\alpha}^{\omega_{H}\left(\overrightarrow{j}\right)}\left(E_{q,s}\left(\rho_{A|B_j}\right)\right)^{\alpha}.
\label{qbitalpmono2}
\end{align}
We first prove Inequality~(\ref{qbitalpmono2}) for the case that $N=2^n$, a power of 2, by using mathematical induction on $n$, and extend the result for any positive integer $N$.

For $n=1$ and a three-qubit state $\rho_{AB_0B_1}$ with two-qubit rduced density matrices $\rho_{AB_0}$ and $\rho_{AB_1}$, we have
\begin{align}
&\left(E_{q,s}\left(\rho_{A|B_0}\right)+E_{q,s}\left(\rho_{A|B_1}\right)\right)^{\alpha}
=\left(E_{q,s}\left(\rho_{A|B_0}\right)\right)^{\alpha}\left(1+\frac{E_{q,s}\left(\rho_{A|B_1}\right)}
{E_{q,s}\left(\rho_{A|B_0}\right)} \right)^{\alpha},
\label{p3scmono1}
\end{align}
where Inequalities~(\ref{alpgre1}) and (\ref{pordern}) implies
\begin{align}
\left(1+\frac{E_{q,s}\left(\rho_{A|B_1}\right)}{E_{q,s}\left(\rho_{A|B_0}\right)} \right)^{\alpha}
\geq1+\alpha\left(\frac{E_{q,s}\left(\rho_{A|B_1}\right)}
{E_{q,s}\left(\rho_{A|B_0}\right)}\right)^{\alpha}.
\label{p3scmono2}
\end{align}
From Eq.~(\ref{p3scmono1}) and Inequality~(\ref{p3scmono2}), we have
\begin{align}
\left(E_{q,s}\left(\rho_{A|B_0}\right)+E_{q,s}\left(\rho_{A|B_1}\right)\right)^{\alpha}\geq
\left(E_{q,s}\left(\rho_{A|B_0}\right)\right)^{\alpha}+
\alpha\left(E_{q,s}\left(\rho_{A|B_1}\right)\right)^{\alpha},
\label{p3scmono3}
\end{align}
which recovers Inequality~(\ref{qbitalpmono2}) for $n=1$.

Now let us assume Inequality~(\ref{qbitalpmono2}) is true for $N=2^{n-1}$ with $n\geq 2$, and consider the case that $N=2^n$.
For an $(N+1)$-qubit state $\rho_{AB_0B_1 \cdots B_{N-1}}$ with its two-qubit reduced density matrices $\rho_{AB_j}$ with $j=0, \cdots, N-1$,
we have
\begin{align}
\left(\sum_{j=0}^{N-1}E_{q,s}\left(\rho_{A|B_j}\right)\right)^{\alpha}=
\left(\sum_{j=0}^{2^{n-1}-1}E_{q,s}\left(\rho_{A|B_j}\right)\right)^{\alpha}
\left(1+\frac{\sum_{j=2^{n-1}}^{2^n-1}E_{q,s}\left(\rho_{A|B_j}\right)}
{\sum_{j=0}^{2^{n-1}-1}E_{q,s}\left(\rho_{A|B_j}\right)}\right)^{\alpha}.
\label{pnscmono}
\end{align}
Because the ordering of subsystems in Inequality~(\ref{pordern}) implies
\begin{align}
0\leq\frac{\sum_{j=2^{n-1}}^{2^n-1}E_{q,s}\left(\rho_{A|B_j}\right)}
{\sum_{j=0}^{2^{n-1}-1}E_{q,s}\left(\rho_{A|B_j}\right)}\leq 1,
\label{sumine1}
\end{align}
thus Eq.~(\ref{pnscmono}) and Inequality~(\ref{alpgre1}) lead us to
\begin{align}
\left(\sum_{j=0}^{N-1}E_{q,s}\left(\rho_{A|B_j}\right)\right)^{\alpha}
\geq
\left(\sum_{j=0}^{2^{n-1}-1}E_{q,s}\left(\rho_{A|B_j}\right)\right)^{\alpha}+
\alpha\left(\sum_{j=2^{n-1}}^{2^{n}-1}E_{q,s}\left(\rho_{A|B_j}\right)\right)^{\alpha}.
\label{pnscmono1}
\end{align}

From the induction hypothesis, we have
\begin{align}
\left(\sum_{j=0}^{2^{n-1}-1}E_{q,s}\left(\rho_{A|B_j}\right)\right)^{\alpha}\geq&
\sum_{j=0}^{2^{n-1}-1}{\alpha}^{\omega_{H}\left(\overrightarrow{j}\right)}\left(E_{q,s}\left(\rho_{A|B_j}\right)\right)^{\alpha}.
\label{pnscmono2}
\end{align}
Moreover, the last summation in Inequality~(\ref{pnscmono1}) is also a summation of $2^{n-1}$ terms starting from $j=2^{n-1}$ to $j=2^{n}-1$.
Thus, (after possible indexing and reindexing subsystems, if necessary) the induction hypothesis also leads us to
\begin{align}
\left(\sum_{j=2^{n-1}}^{2^{n}-1}E_{q,s}\left(\rho_{A|B_j}\right)\right)^{\alpha}\geq&
\sum_{j=2^{n-1}}^{2^{n}-1}{\alpha}^{\omega_{H}\left(\overrightarrow{j}\right)-1}\left(E_{q,s}\left(\rho_{A|B_j}\right)\right)^{\alpha}.
\label{pnscmono3}
\end{align}
From Inequalities~(\ref{pnscmono1}), (\ref{pnscmono2}) and (\ref{pnscmono3}), we have
\begin{align}
\left(\sum_{j=0}^{2^n-1}E_{q,s}\left(\rho_{A|B_j}\right)\right)^{\alpha}
\geq& \sum_{j=0}^{2^n-1}{\alpha}^{\omega_{H}\left(\overrightarrow{j}\right)}\left(E_{q,s}\left(\rho_{A|B_j}\right)\right)^{\alpha},
\label{pnscmono4}
\end{align}
which recovers Inequality~(\ref{qbitalpmono2}) for $N=2^n$.

Now let us consider an arbitrary positive integer $N$ and a $(N+1)$-qubit state $\rho_{AB_0B_1\cdots B_{N-1}}$. We first note that
we can always consider a power of $2$, which is an upper bound of $N$, that is, $0\leq N \leq 2^{n}$ for some $n$.
We also consider a $(2^{n}+1)$-qubit state
\begin{align}
\Gamma_{AB_0 B_1 \cdots B_{2^n-1}}=\rho_{AB_0B_1\cdots B_{N-1}}\otimes\sigma_{B_N \cdots B_{2^n-1}},
\label{gamma}
\end{align}
which is a product of $\rho_{AB_0B_1\cdots B_{N-1}}$ and an arbitrary $(2^n-N)$-qubit state $\sigma_{B_N \cdots B_{2^n-1}}$.

Because $\Gamma_{AB_0 B_1 \cdots B_{2^n-1}}$ is a $(2^{n}+1)$-qubit state, Inequality~(\ref{pnscmono4}) leads us to
\begin{equation}
\left(E_{q,s}\left(\Gamma_{A|B_0 B_1 \cdots B_{2^n-1}}\right)\right)^{\alpha}  \geq
\sum_{j=0}^{2^n-1}{\alpha}^{\omega_{H}\left(\overrightarrow{j}\right)}\left(E_{q,s}\left(\Gamma_{A|B_j}\right)\right)^{\alpha},
\label{gamono}
\end{equation}
where $\Gamma_{AB_j}$ is the two-qubit reduced density matric of $\Gamma_{AB_0 B_1 \cdots B_{2^n-1}}$ for each $j= 0, \cdots, 2^n-1$.
On the other hand, the separability of $\Gamma_{AB_0 B_1 \cdots B_{2^n-1}}$ with respect to the bipartition between $AB_0\cdots B_{N-1}$ and $B_N \cdots B_{2^n-1}$ assures
\begin{align}
E_{q,s}\left(\Gamma_{A|B_0 B_1 \cdots B_{2^n-1}}\right)=E_{q,s}\left(\rho_{A|B_0 B_1 \cdots B_{N-1}}\right),
\label{same1}
\end{align}
as well as
\begin{align}
E_{q,s}\left(\Gamma_{A|B_j}\right)=0,
\label{same3}
\end{align}
for $j=N, \cdots , 2^n-1$.
Moreover, we have
\begin{align}
\Gamma_{AB_j}=\rho_{AB_j},
\label{same2}
\end{align}
for each $j=0, \cdots , N-1$.
Thus, Inequality~(\ref{gamono}) together with Eqs.~(\ref{same1}), (\ref{same3}) and (\ref{same2}) leads us to
\begin{align}
\left(E_{q,s}\left(\rho_{A|B_0 B_1 \cdots B_{N-1}}\right)  \right)^{\alpha}=&
\left(E_{q,s}\left(\Gamma_{A|B_0 B_1 \cdots B_{2^n-1}}\right)\right)^{\alpha}\nonumber\\
\geq&\sum_{j=0}^{2^n-1}{\alpha}^{\omega_{H}\left(\overrightarrow{j}\right)}\left(E_{q,s}\left(\Gamma_{A|B_j}\right)\right)^{\alpha}\nonumber\\
=&\sum_{j=0}^{N-1}{\alpha}^{\omega_{H}\left(\overrightarrow{j}\right)}\left(E_{q,s}\left(\rho_{A|B_j}\right)\right)^{\alpha},
\label{pnscmono5}
\end{align}
and this completes the proof.
\end{proof}

For any $\alpha\geq 1$ and the Hamming weight $\omega_{H}\left(\overrightarrow{j}\right)$ of the binary vector $\overrightarrow{j}=\left(j_0, \cdots ,j_{n-1}\right)$, ${\alpha}^{\omega_{H}\left(\overrightarrow{j}\right)}$ is greater than or equal to $1$, therefore
\begin{align}
\left(E_{q,s}\left(\rho_{A|B_0 B_1 \cdots B_{N-1}}\right)\right)^{\alpha}  \geq
\sum_{j=0}^{N-1}{\alpha}^{\omega_{H}\left(\overrightarrow{j}\right)}\left(E_{q,s}\left(\rho_{A|B_j}\right)\right)^{\alpha}
\geq\sum_{j=0}^{N-1}\left(E_{q,s}\left(\rho_{A|B_j}\right)\right)^{\alpha},
\label{ineqtight1}
\end{align}
for any multi-qubit state $\rho_{AB_0 B_1 \cdots B_{N-1}}$ and $\alpha\geq 1$. Thus Inequality~(\ref{pcrenmono0}) of
Theorem~\ref{tmono0} is generally tighter than the monogamy inequalities of multi-qubit entanglement, which just use
the $\alpha$th-power of entanglement measures.

Due to the continuity of UE with respect to the parameters $q$ and $s$, Inequality~(\ref{pcrenmono0}) of
Theorem~\ref{tmono0} reduces to the class of R\'enyi-$q$ entropy-based monogamy inequalities of multi-qubit entanglement~\cite{KSRenyi} in a tighter way when $s$ tends to $0$;
\begin{equation}
\left({\mathcal R}_{q}\left(\rho_{A|B_0 B_1 \cdots B_{N-1}}\right)\right)^{\alpha}  \geq
\sum_{j=0}^{N-1}{\alpha}^{\omega_{H}\left(\overrightarrow{j}\right)}\left({\mathcal R}_{q}\left(\rho_{A|B_j}\right)\right)^{\alpha},
\label{renyimono0}
\end{equation}
for any $\alpha \geq1$, $q\geq2$ and multi-qubit state $\rho_{AB_0 B_1 \cdots B_{N-1}}$ where ${\mathcal R}_{q}\left(\rho_{A|B_0 B_1 \cdots B_{N-1}}\right)$ is the RE of $\rho_{AB_0 B_1 \cdots B_{N-1}}$ with respect to the bipartition between $A$ and $B_0 B_1 \cdots B_{N-1}$~\cite{KSRenyi}.
When $s$ tends to $1$, Inequality~(\ref{pcrenmono0}) reduces to another class of
monogamy inequalities, namely, Tsallis-$q$ entropy-based monogamy inequalities of multi-qubit entanglement~\cite{KimT} in a tighter way;
\begin{equation}
\left({\mathcal T}_{q}\left(\rho_{A|B_0 B_1 \cdots B_{N-1}}\right)\right)^{\alpha}  \geq
\sum_{j=0}^{N-1}{\alpha}^{\omega_{H}\left(\overrightarrow{j}\right)}\left({\mathcal T}_{q}\left(\rho_{A|B_j}\right)\right)^{\alpha},
\label{Tmono0}
\end{equation}
for any $\alpha \geq1$, $2 \leq q \leq 3$ and multi-qubit state $\rho_{AB_0 B_1 \cdots B_{N-1}}$ where ${\mathcal T}_{q}\left(\rho_{A|B_0 B_1 \cdots B_{N-1}}\right)$ is the TE of $\rho_{AB_0 B_1 \cdots B_{N-1}}$ with respect to the bipartition between $A$ and $B_0 B_1 \cdots B_{N-1}$~\cite{KimT}.

We also note that Inequality~(\ref{pcrenmono0}) of Theorem~\ref{tmono0} can be even improved to be a tighter inequality with some condition on two-qubit entanglement;
\begin{Thm}
For $\alpha \geq 1$, $q\geq2$, $0\leq s \leq1$, $qs\leq3$ and any multi-qubit state $\rho_{AB_0\cdots B_{N-1}}$, we have
\begin{equation}
\left(E_{q,s}\left(\rho_{A|B_0 \cdots B_{N-1}}\right)\right)^{\alpha}  \geq
\sum_{j=0}^{N-1}{\alpha}^{j}\left(E_{q,s}\left(\rho_{A|B_j}\right)\right)^{\alpha},
\label{pcrenmono2}
\end{equation}
conditioned that
\begin{align}
E_{q,s}\left(\rho_{A|B_i}\right)\geq \sum_{j=i+1}^{N-1}E_{q,s}\left(\rho_{A|B_{j}}\right),
\label{cond2}
\end{align}
for $i=0, \cdots , N-2$.
\label{tmono2}
\end{Thm}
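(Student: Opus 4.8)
The plan is to mimic the two-step structure already used for Theorem~\ref{tmono0}. First, by the monotonicity of $t\mapsto t^{\alpha}$ for $\alpha\geq1$ together with the UE-based monogamy inequality~(\ref{Umono}), Inequality~(\ref{pcrenmono2}) reduces to the purely numerical statement
\[
\left(\sum_{j=0}^{N-1}E_{q,s}\left(\rho_{A|B_j}\right)\right)^{\alpha}\geq\sum_{j=0}^{N-1}{\alpha}^{j}\left(E_{q,s}\left(\rho_{A|B_j}\right)\right)^{\alpha},
\]
which I would prove by induction on $N$ directly --- in contrast to Theorem~\ref{tmono0}, no padding up to a power of $2$ and no reordering of subsystems is needed here, since the ordering is fixed by the hypothesis~(\ref{cond2}). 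Writing $E_j:=E_{q,s}\left(\rho_{A|B_j}\right)$, the base case $N=1$ is the identity $E_0^{\alpha}=E_0^{\alpha}$, while $N=2$ is exactly Inequality~(\ref{p3scmono3}), whose ordering assumption is the $i=0$ case of~(\ref{cond2}).

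For the inductive step I would peel off the first subsystem. If $E_0=0$ then~(\ref{cond2}) with $i=0$ forces every $E_j=0$ and the inequality is trivial, so I may assume $E_0>0$ and factor
\[
\left(\sum_{j=0}^{N-1}E_j\right)^{\alpha}=E_0^{\alpha}\left(1+\frac{\sum_{j=1}^{N-1}E_j}{E_0}\right)^{\alpha}.
\]
The $i=0$ case of~(\ref{cond2}) says precisely that $x:=\left(\sum_{j=1}^{N-1}E_j\right)/E_0\in[0,1]$, so Lemma~\ref{Lem: ineqs}, Inequality~(\ref{alpgre1}), gives
\[
\left(\sum_{j=0}^{N-1}E_j\right)^{\alpha}\geq E_0^{\alpha}+\alpha\left(\sum_{j=1}^{N-1}E_j\right)^{\alpha}.
\]

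The key point --- and the only place that requires a moment's care --- is that the sub-collection $B_1,\dots,B_{N-1}$, after shifting the index down by one, again satisfies the hypothesis~(\ref{cond2}): the required inequalities $E_i\geq\sum_{j=i+1}^{N-1}E_j$ for $i=1,\dots,N-2$ are simply~(\ref{cond2}) restricted to those indices. Hence the induction hypothesis applies to $\left(\sum_{j=1}^{N-1}E_j\right)^{\alpha}$ and yields $\sum_{j=1}^{N-1}\alpha^{j-1}E_j^{\alpha}$. Substituting this into the previous display, the extra factor $\alpha$ turns $\alpha^{j-1}$ into $\alpha^{j}$, so that
\[
\left(\sum_{j=0}^{N-1}E_j\right)^{\alpha}\geq E_0^{\alpha}+\alpha\sum_{j=1}^{N-1}\alpha^{j-1}E_j^{\alpha}=\sum_{j=0}^{N-1}\alpha^{j}E_j^{\alpha},
\]
completing the induction. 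Then reversing the first reduction step finishes the proof. I do not expect a genuine obstacle: the argument is the same peeling trick as in Theorem~\ref{tmono0}, specialized to the situation where~(\ref{cond2}) keeps every successive tail ratio below $1$ (so each peeling step costs only a single factor of $\alpha$, producing the geometric weights $\alpha^{j}$ rather than the weights ${\alpha}^{\omega_{H}\left(\overrightarrow{j}\right)}$ of the unconditional bound). The one thing I would be careful to spell out is the inheritance of~(\ref{cond2}) by the sub-collection in the inductive step, together with the degenerate case $E_0=0$.
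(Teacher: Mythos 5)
Your proposal is correct and follows essentially the same route as the paper's proof: the same reduction via monotonicity and Inequality~(\ref{Umono}) to the numerical statement~(\ref{qbitalpmono3}), the same induction on $N$ with base case~(\ref{p3scmono3}), and the same peeling of $E_{q,s}\left(\rho_{A|B_0}\right)$ using Lemma~\ref{Lem: ineqs} under condition~(\ref{cond2}). The two points you flag explicitly --- the degenerate case $E_{q,s}\left(\rho_{A|B_0}\right)=0$ and the inheritance of~(\ref{cond2}) by the tail sub-collection --- are left implicit in the paper, so spelling them out is a small improvement rather than a deviation.
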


\begin{proof}
Due to Inequality~(\ref{qbitalpmono1}), it is enough to show
\begin{align}
\left(\sum_{j=0}^{N-1}E_{q,s}\left(\rho_{A|B_j}\right)\right)^{\alpha}\geq&
\sum_{j=0}^{N-1}{\alpha}^{j}\left(E_{q,s}\left(\rho_{A|B_j}\right)\right)^{\alpha},
\label{qbitalpmono3}
\end{align}
and we use mathematical induction on $N$.
We further note that Inequality~(\ref{p3scmono3})
in the proof of Theorem~\ref{tmono0} assures that Inequality~(\ref{qbitalpmono3}) is true for $N=2$.

Now let us assume the validity of Inequality~(\ref{qbitalpmono3})
for any positive integer less than $N$.
For a multi-qubit state $\rho_{AB_0 \cdots B_{N-1}}$, we have
\begin{align}
\left(\sum_{j=0}^{N-1}E_{q,s}\left(\rho_{A|B_j}\right)\right)^{\alpha}
=&\left(E_{q,s}\left(\rho_{A|B_0}\right)\right)^{\alpha}
\left(1+\frac{\sum_{j=1}^{N-1}E_{q,s}\left(\rho_{A|B_j}\right)}
{E_{q,s}\left(\rho_{A|B_0}\right)} \right)^{\alpha},
\label{pnscmono5}
\end{align}
where Inequality~(\ref{alpgre1}) and the condition in Inequality~(\ref{cond2}) lead Inequality~(\ref{pnscmono5}) to
\begin{align}
\left(1+\frac{\sum_{j=1}^{N-1}E_{q,s}\left(\rho_{A|B_j}\right)}
{E_{q,s}\left(\rho_{A|B_0}\right)} \right)^{\alpha}\geq&1
+\alpha\left(\frac{\sum_{j=1}^{N-1}E_{q,s}\left(\rho_{A|B_j}\right)}
{E_{q,s}\left(\rho_{A|B_0}\right)}\right)^{\alpha}.
\label{coninq1}
\end{align}
Thus Eq.~(\ref{pnscmono5}) and Inequality~(\ref{coninq1}) imply
\begin{align}
\left(\sum_{j=0}^{N-1}E_{q,s}\left(\rho_{A|B_j}\right)
\right)^{\alpha}\geq&
\left(E_{q,s}\left(\rho_{A|B_0}\right)\right)^{\alpha}
+\alpha\left(\sum_{j=1}^{N-1}E_{q,s}\left(\rho_{A|B_j}\right)
\right)^{\alpha}.
\label{pnscmono6}
\end{align}

Because the summation in the right-hand side of Inequality~(\ref{pnscmono6}) is a summation of $N-1$ terms, the induction hypothesis assures
\begin{align}
\left(\sum_{j=1}^{N-1}E_{q,s}\left(\rho_{A|B_j}\right)
\right)^{\alpha}\geq
\sum_{j=1}^{N-1}{\alpha}^{j-1}\left(E_{q,s}\left(\rho_{A|B_j}\right)\right)^{\alpha}.
\label{pnscmono7}
\end{align}
Now, Inequality~(\ref{pnscmono6}) together with Inequality~(\ref{pnscmono7}) recover Inequality~(\ref{qbitalpmono3}),
and this completes the proof.
\end{proof}

For any nonnegative integer $j$ and its corresponding binary vector $\overrightarrow{j}$, the Hamming weight $\omega_{H}\left(\overrightarrow{j}\right)$ is bounded above by $\log_2 j$. Thus we have
\begin{align}
\omega_{H}\left(\overrightarrow{j}\right)\leq \log_2 j \leq j,
\label{numcom}
\end{align}
which implies
\begin{align}
\left(E_{q,s}\left(\rho_{A|B_0 \cdots B_{N-1}}\right)\right)^{\alpha}  \geq&
\sum_{j=0}^{N-1}{\alpha}^{j}\left(E_{q,s}\left(\rho_{A|B_j}\right)\right)^{\alpha}
\geq \sum_{j=0}^{N-1}{\alpha}^{\omega_{H}\left(\overrightarrow{j}\right)}\left(E_{q,s}\left(\rho_{A|B_j}\right)\right)^{\alpha},
\label{ineqtight2}
\end{align}
for any $\alpha\geq 1$. In other words, Inequality~(\ref{pcrenmono0}) in
Theorem~\ref{tmono0} can be made to be even tighter as
Inequality~(\ref{pcrenmono2}) of Theorem~\ref{tmono2} for any multi-qubit state $\rho_{AB_0 B_1 \cdots B_{N-1}}$ satisfying the condition in Inequality~(\ref{cond2}).

\section{Tight polygamy constraints of multi-qubit entanglement in terms of unified entanglement of assistance}
\label{Sec: TP_qubit}

As a dual property to the Inequality~(\ref{pcrenmono0}) of
Theorem~\ref{tmono0}, we provide a class of polygamy inequalities of multi-qubit entanglement in terms of
powered UEoA.
\begin{Thm}
For~$0\leq \beta\leq 1$, $-q^2+4q-3 \leq s \leq 1$ on $1 \leq q \leq 2$ and any multi-qubit state $\rho_{AB_0\cdots B_{N-1}}$, we have
\begin{equation}
\left(E^a_{q,s}\left(\rho_{A|B_0 B_1 \cdots B_{N-1}}\right)\right)^{\beta}  \leq
\sum_{j=0}^{N-1}{\beta}^{\omega_{H}\left(\overrightarrow{j}\right)}\left(E^a_{q,s}\left(\rho_{A|B_j}\right)\right)^{\beta}.
\label{pcrenpoly0}
\end{equation}
\label{tpoly0}
\end{Thm}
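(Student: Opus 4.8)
The plan is to mirror the proof of Theorem~\ref{tmono0}, replacing each monogamy inequality with its polygamy dual and each instance of Lemma~\ref{Lem: ineqs} inequality~(\ref{alpgre1}) by inequality~(\ref{betle1}). First I would reduce to the inequality
\begin{align}
\left(\sum_{j=0}^{N-1}E^a_{q,s}\left(\rho_{A|B_j}\right)\right)^{\beta}\leq
\sum_{j=0}^{N-1}{\beta}^{\omega_{H}\left(\overrightarrow{j}\right)}\left(E^a_{q,s}\left(\rho_{A|B_j}\right)\right)^{\beta}
\label{poly-reduce}
\end{align}
by invoking the UEoA-based polygamy inequality~(\ref{UEpoly}) together with the monotonicity of $x\mapsto x^{\beta}$ for $0\le\beta\le1$: since $E^a_{q,s}\left(\rho_{A|B_0\cdots B_{N-1}}\right)\le\sum_j E^a_{q,s}\left(\rho_{A|B_j}\right)$, raising both sides to the $\beta$ power gives $\left(E^a_{q,s}\left(\rho_{A|B_0\cdots B_{N-1}}\right)\right)^{\beta}\le\left(\sum_j E^a_{q,s}\left(\rho_{A|B_j}\right)\right)^{\beta}$, so it suffices to prove~(\ref{poly-reduce}). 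As before, I would relabel the $B_j$ so that $E^a_{q,s}\left(\rho_{A|B_j}\right)\ge E^a_{q,s}\left(\rho_{A|B_{j+1}}\right)\ge0$.

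Next I would prove~(\ref{poly-reduce}) first for $N=2^n$ by induction on $n$. For the base case $n=1$, write
\begin{align}
\left(E^a_{q,s}\left(\rho_{A|B_0}\right)+E^a_{q,s}\left(\rho_{A|B_1}\right)\right)^{\beta}
=\left(E^a_{q,s}\left(\rho_{A|B_0}\right)\right)^{\beta}\left(1+\frac{E^a_{q,s}\left(\rho_{A|B_1}\right)}{E^a_{q,s}\left(\rho_{A|B_0}\right)}\right)^{\beta}
\label{poly-base}
\end{align}
and apply inequality~(\ref{betle1}) with $x=E^a_{q,s}\left(\rho_{A|B_1}\right)/E^a_{q,s}\left(\rho_{A|B_0}\right)\in[0,1]$ to get the desired $\le\left(E^a_{q,s}\left(\rho_{A|B_0}\right)\right)^{\beta}+\beta\left(E^a_{q,s}\left(\rho_{A|B_1}\right)\right)^{\beta}$. (If $E^a_{q,s}\left(\rho_{A|B_0}\right)=0$ then all terms vanish and the inequality is trivial.) For the inductive step I would split the $2^n$ terms into the first $2^{n-1}$ and the last $2^{n-1}$, factor out $\left(\sum_{j=0}^{2^{n-1}-1}E^a_{q,s}\left(\rho_{A|B_j}\right)\right)^{\beta}$, note that the ratio of the tail sum to the head sum lies in $[0,1]$ by the ordering, apply~(\ref{betle1}) to peel off the tail block with a factor $\beta$, and then apply the induction hypothesis to each block of $2^{n-1}$ terms — the tail block picking up an extra factor $\beta$ which matches the increment of the Hamming weight by $1$ when the leading bit flips from $0$ to $1$.

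Finally, to pass from powers of $2$ to arbitrary $N$, I would embed $\rho_{AB_0\cdots B_{N-1}}$ into a $(2^n+1)$-qubit state $\Gamma_{AB_0\cdots B_{2^n-1}}=\rho_{AB_0\cdots B_{N-1}}\otimes\sigma_{B_N\cdots B_{2^n-1}}$ exactly as in~(\ref{gamma}); separability across the $AB_0\cdots B_{N-1}$ versus $B_N\cdots B_{2^n-1}$ cut gives $E^a_{q,s}\left(\Gamma_{A|B_0\cdots B_{2^n-1}}\right)=E^a_{q,s}\left(\rho_{A|B_0\cdots B_{N-1}}\right)$, $E^a_{q,s}\left(\Gamma_{A|B_j}\right)=0$ for $j\ge N$, and $\Gamma_{AB_j}=\rho_{AB_j}$ for $j<N$, so the $2^n$-term inequality collapses to the $N$-term one. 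The one delicate point — the analogue of the main obstacle in the monogamy proof — is the direction of the ordering assumption relative to the direction of Lemma~\ref{Lem: ineqs}: the factoring in~(\ref{poly-base}) and in the inductive split requires the \emph{larger} summand (or block) in the denominator so that $x\in[0,1]$, and one must check this is consistent with needing the Hamming-weight exponent to land on the \emph{tail} block; this works out because reindexing the tail block as a fresh block of $2^{n-1}$ terms only shifts each index by $2^{n-1}$, which adds exactly $1$ to every Hamming weight. I would also flag the degenerate cases where some $E^a_{q,s}\left(\rho_{A|B_j}\right)$ vanish, handled as in the base case. With these pieces assembled, inequality~(\ref{pcrenpoly0}) follows.
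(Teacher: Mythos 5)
Your proposal is correct and follows essentially the same route as the paper's own proof: reduce via the UEoA polygamy inequality and monotonicity of $x^{\beta}$, induct on $n$ for $N=2^n$ using the factor-and-apply-Lemma-\ref{Lem: ineqs} step with the Hamming weight increasing by one on the tail block, then pad with a product state to handle general $N$. Your explicit attention to the degenerate case $E^a_{q,s}\left(\rho_{A|B_0}\right)=0$ and to why the reindexing of the tail block shifts every Hamming weight by exactly one are welcome additions that the paper leaves implicit.
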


\begin{proof}
Without loss of generality, we assume the ordering of the qubit subsystems $B_0, \cdots, B_{N-1}$ satisfying
\begin{align}
E^a_{q,s}\left(\rho_{A |B_j}\right)\geq E^a_{q,s}\left(\rho_{A |B_{j+1}}\right)\geq0
\label{pordern2}
\end{align}
for each $j=0, \cdots , N-2$.
Moreover, due to the monotonicity of the function $f(x)=x^{\beta}$ for $0 \leq \beta \leq 1$ and the UEoA-based multi-qubit polygamy inequality in~(\ref{UEpoly}), we have
\begin{align}
\left(E^a_{q,s}\left(\rho_{A|B_0B_1\cdots B_{N-1}}\right)\right)^{\beta}\leq&
\left(\sum_{j=0}^{N-1}E^a_{q,s}\left(\rho_{A|B_j}\right)\right)^{\beta},
\label{qbitalppoly1}
\end{align}
thus it is enough to show that
\begin{align}
\left(\sum_{j=0}^{N-1}E^a_{q,s}\left(\rho_{A|B_j}\right)\right)^{\beta}\leq&
\sum_{j=0}^{N-1}{\beta}^{\omega_{H}\left(\overrightarrow{j}\right)}\left(E_{q,s}\left(\rho_{A|B_j}\right)\right)^{\beta}.
\label{qbitalppoly2}
\end{align}

The proof method is similar to that of Theorem~\ref{tmono0};
we first prove Inequality~(\ref{qbitalppoly2}) for the case that $N=2^n$ by using mathematical induction on $n$, and generalize the result to any positive integer $N$.
For $n=1$ and a three-qubit state $\rho_{AB_0B_1}$ with two-qubit rduced density matrices $\rho_{AB_0}$ and $\rho_{AB_1}$, we have
\begin{align}
&\left(E^a_{q,s}\left(\rho_{A|B_0}\right)+E^a_{q,s}\left(\rho_{A|B_1}\right)\right)^{\beta}
=\left(E^a_{q,s}\left(\rho_{A|B_0}\right)\right)^{\beta}\left(1+\frac{E^a_{q,s}\left(\rho_{A|B_1}\right)}
{E^a_{q,s}\left(\rho_{A|B_0}\right)} \right)^{\beta},
\label{p3scpoly1}
\end{align}
which, together with Inequalities~(\ref{betle1}) and (\ref{pordern2}) leads us to
\begin{align}
&\left(E^a_{q,s}\left(\rho_{A|B_0}\right)+E^a_{q,s}\left(\rho_{A|B_1}\right)\right)^{\beta}
\leq\left(E^a_{q,s}\left(\rho_{A|B_0}\right)\right)^{\beta}+
\beta\left(E^a_{q,s}\left(\rho_{A|B_1}\right)\right)^{\beta}.
\label{p3scpoly3}
\end{align}
Inequality~(\ref{p3scpoly3}) recovers Inequality~(\ref{qbitalppoly2}) for $n=1$.

Now we assume the validity of Inequality~(\ref{qbitalppoly2}) for $N=2^{n-1}$ with $n\geq 2$, and consider the case that $N=2^n$.
For an $(N+1)$-qubit state $\rho_{AB_0B_1 \cdots B_{N-1}}$ and its two-qubit reduced density matrices $\rho_{AB_j}$ with $j=0, \cdots, N-1$,
we have
\begin{align}
\left(\sum_{j=0}^{N-1}E^a_{q,s}\left(\rho_{A|B_j}\right)\right)^{\beta}
=&\left(\sum_{j=0}^{2^{n-1}-1}E^a_{q,s}\left(\rho_{A|B_j}\right)\right)^{\beta}
\left(1+\frac{\sum_{j=2^{n-1}}^{2^n-1}E^a_{q,s}\left(\rho_{A|B_j}\right)}
{\sum_{j=0}^{2^{n-1}-1}E^a_{q,s}\left(\rho_{A|B_j}\right)}\right)^{\beta},
\label{pnscpoly}
\end{align}
where the ordering of subsystems in Inequality~(\ref{pordern2}) and
Inequality~(\ref{betle1}) together with Eq.~(\ref{pnscpoly}) lead us to
\begin{align}
\left(\sum_{j=0}^{N-1}E^a_{q,s}\left(\rho_{A|B_j}\right)\right)^{\beta}
\leq&\left(\sum_{j=0}^{2^{n-1}-1}E^a_{q,s}\left(\rho_{A|B_j}\right)\right)^{\beta}
+\beta\left(\sum_{j=2^{n-1}}^{2^{n}-1}E^a_{q,s}\left(\rho_{A|B_j}\right)\right)^{\beta}.
\label{pnscpoly1}
\end{align}

Because each summation on the right-hand side of Inequality~(\ref{pnscpoly1}) is a summation of $2^{n-1}$ terms , the induction hypothesis assures that
\begin{align}
\left(\sum_{j=0}^{2^{n-1}-1}E^a_{q,s}\left(\rho_{A|B_j}\right)\right)^{\beta}\leq&
\sum_{j=0}^{2^{n-1}-1}{\beta}^{\omega_{H}\left(\overrightarrow{j}\right)}\left(E^a_{q,s}\left(\rho_{A|B_j}\right)\right)^{\beta},
\label{pnscpoly2}
\end{align}
and
\begin{align}
\left(\sum_{j=2^{n-1}}^{2^{n}-1}E^a_{q,s}\left(\rho_{A|B_j}\right)\right)^{\beta}\leq&
\sum_{j=2^{n-1}}^{2^{n}-1}{\beta}^{\omega_{H}\left(\overrightarrow{j}\right)-1}\left(E^a_{q,s}\left(\rho_{A|B_j}\right)\right)^{\beta}.
\label{pnscpoly3}
\end{align}
(Possibly, we may index and reindex subsystems to get Inequality~(\ref{pnscpoly3}), if necessary.)
Thus, Inequalities~(\ref{pnscpoly1}),~ (\ref{pnscpoly2}) and (\ref{pnscpoly3})
recover Inequality~(\ref{qbitalppoly2}) when $N=2^n$.

For an arbitrary positive integer $N$ and a $(N+1)$-qubit state $\rho_{AB_0B_1\cdots B_{N-1}}$,
leu us consider the $(2^{n}+1)$-qubit state $\Gamma_{AB_0 B_1 \cdots B_{2^n-1}}$
in Eq.~(\ref{gamma}). Because $\Gamma_{AB_0 B_1 \cdots B_{2^n-1}}$ is a $(2^{n}+1)$-qubit state, we have
\begin{equation}
\left(E^a_{q,s}\left(\Gamma_{A|B_0 B_1 \cdots B_{2^n-1}}\right)\right)^{\beta}\leq
\sum_{j=0}^{2^n-1}{\beta}^{\omega_{H}\left(\overrightarrow{j}\right)}\left(E^a_{q,s}\left(\Gamma_{A|B_j}\right)\right)^{\beta},
\label{gapoly}
\end{equation}
where $\Gamma_{AB_j}$ is the two-qubit reduced density matric of $\Gamma_{AB_0 B_1 \cdots B_{2^n-1}}$ for each $j= 0, \cdots, 2^n-1$.

Moreover, $\Gamma_{AB_0 B_1 \cdots B_{2^n-1}}$ is a product state of $\rho_{AB_0B_1\cdots B_{N-1}}$ and $\sigma_{B_N \cdots B_{2^n-1}}$,
which implies
\begin{align}
E^a_{q,s}\left(\Gamma_{A|B_0 B_1 \cdots B_{2^n-1}}\right)=E^a_{q,s}\left(\rho_{A|B_0 B_1 \cdots B_{N-1}}\right),
\label{psame1}
\end{align}
and
\begin{align}
E^a_{q,s}\left(\Gamma_{A|B_j}\right)=0,
\label{psame3}
\end{align}
for $j=N, \cdots , 2^n-1$. We also note that
\begin{align}
\Gamma_{AB_j}=\rho_{AB_j},
\label{psame2}
\end{align}
for each $j=0, \cdots , N-1$. Thus Inequality~(\ref{gapoly}) together with Eqs.~(\ref{psame1}), (\ref{psame3}) and (\ref{psame2}) recovers
Inequality~(\ref{pcrenpoly0}), and this completes the proof.
\end{proof}

Similarly to the case of monogamy inequalities, Inequality~(\ref{pcrenpoly0}) of Theorem~\ref{tpoly0}
reduces to a class of Tsallis-$q$ entropy-based polygamy inequalities of multi-qubit entanglement in a tighter way;
\begin{equation}
\left({\mathcal T}^a_{q}\left(\rho_{A|B_0 B_1 \cdots B_{N-1}}\right)\right)^{\beta}  \leq
\sum_{j=0}^{N-1}{\beta}^{\omega_{H}\left(\overrightarrow{j}\right)}\left({\mathcal T}^a_{q}\left(\rho_{A|B_j}\right)\right)^{\beta},
\label{Tpoly0}
\end{equation}
for any $0\leq \beta\leq 1$, $1 \leq q \leq 2$ and multi-qubit state $\rho_{AB_0 B_1 \cdots B_{N-1}}$ where ${\mathcal T}^a_{q}\left(\rho_{A|B_0 B_1 \cdots B_{N-1}}\right)$ is the TEoA of $\rho_{AB_0 B_1 \cdots B_{N-1}}$ with respect to the bipartition between $A$ and $B_0 B_1 \cdots B_{N-1}$~\cite{KimT}. We further note that Inequality~(\ref{pcrenpoly0}) of Theorem~\ref{tpoly0} can be improved to a class of tighter
polygamy inequalities with some condition on two-qubit entanglement of assistance.

\begin{Thm}
For $0\leq \beta\leq 1$, $-q^2+4q-3 \leq s \leq 1$ on $1 \leq q \leq 2$ and any multi-qubit state $\rho_{AB_0\cdots B_{N-1}}$, we have
\begin{equation}
\left(E^a_{q,s}\left(\rho_{A|B_0 \cdots B_{N-1}}\right)\right)^{\beta}  \leq
\sum_{j=0}^{N-1}{\beta}^{j}\left(E^a_{q,s}\left(\rho_{A|B_j}\right)\right)^{\beta},
\label{pcrenpoly2}
\end{equation}
conditioned that
\begin{align}
E^a_{q,s}\left(\rho_{A|B_i}\right)\geq \sum_{j=i+1}^{N-1}E^a_{q,s}\left(\rho_{A|B_{j}}\right),
\label{cond3}
\end{align}
for $i=0, \cdots , N-2$.
\label{tpoly2}
\end{Thm}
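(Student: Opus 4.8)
The plan is to mirror the inductive argument of Theorem~\ref{tmono2}, but run it with the reversed (polygamy) direction of every inequality, exactly as Theorem~\ref{tpoly0} mirrors Theorem~\ref{tmono0}. First I would invoke Inequality~(\ref{qbitalppoly1}), which already follows from the monotonicity of $x\mapsto x^{\beta}$ and the UEoA-based multi-qubit polygamy inequality~(\ref{UEpoly}) under the stated parameter range $0\le\beta\le1$, $1\le q\le2$, $-q^2+4q-3\le s\le1$. This reduces the claim to the purely numerical statement
\begin{align}
\left(\sum_{j=0}^{N-1}E^a_{q,s}\left(\rho_{A|B_j}\right)\right)^{\beta}\leq
\sum_{j=0}^{N-1}{\beta}^{j}\left(E^a_{q,s}\left(\rho_{A|B_j}\right)\right)^{\beta},
\label{poly2target}
\end{align}
to be proved for any nonnegative reals satisfying the hypothesis~(\ref{cond3}).

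Next I would prove~(\ref{poly2target}) by induction on $N$. The base case $N=2$ is precisely Inequality~(\ref{p3scpoly3}) already established inside the proof of Theorem~\ref{tpoly0} (the hypothesis~(\ref{cond3}) for $i=0$ is the same ordering condition used there, or indeed weaker). For the inductive step, assuming the statement for all positive integers smaller than $N$, I would split off the first term and write
\begin{align}
\left(\sum_{j=0}^{N-1}E^a_{q,s}\left(\rho_{A|B_j}\right)\right)^{\beta}
=\left(E^a_{q,s}\left(\rho_{A|B_0}\right)\right)^{\beta}
\left(1+\frac{\sum_{j=1}^{N-1}E^a_{q,s}\left(\rho_{A|B_j}\right)}
{E^a_{q,s}\left(\rho_{A|B_0}\right)}\right)^{\beta}.
\label{polysplit}
\end{align}
The condition~(\ref{cond3}) with $i=0$ guarantees the ratio in~(\ref{polysplit}) lies in $[0,1]$, so Lemma~\ref{Lem: ineqs}, Inequality~(\ref{betle1}), applies and gives
\begin{align}
\left(\sum_{j=0}^{N-1}E^a_{q,s}\left(\rho_{A|B_j}\right)\right)^{\beta}
\leq\left(E^a_{q,s}\left(\rho_{A|B_0}\right)\right)^{\beta}
+\beta\left(\sum_{j=1}^{N-1}E^a_{q,s}\left(\rho_{A|B_j}\right)\right)^{\beta}.
\label{polystep}
\end{align}
The remaining sum has $N-1$ terms and the tail conditions~(\ref{cond3}) for $i=1,\dots,N-2$ are exactly the hypotheses needed to apply the induction hypothesis to the family $\{B_1,\dots,B_{N-1}\}$, yielding $\left(\sum_{j=1}^{N-1}E^a_{q,s}(\rho_{A|B_j})\right)^{\beta}\le\sum_{j=1}^{N-1}\beta^{j-1}\left(E^a_{q,s}(\rho_{A|B_j})\right)^{\beta}$; substituting this into~(\ref{polystep}) and multiplying the tail bound by $\beta$ collapses the exponents $\beta\cdot\beta^{j-1}=\beta^{j}$ and recovers~(\ref{poly2target}).

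I expect no genuine obstacle here; the only point demanding a little care is the bookkeeping that the hypothesis~(\ref{cond3}) is inherited by the sub-family $\{B_1,\dots,B_{N-1}\}$ with the shifted index range, and that, unlike the monogamy case, the direction of Lemma~\ref{Lem: ineqs} flips because $0\le\beta\le1$. One should also note that the ordering assumption~(\ref{pordern2}) is not even needed for this theorem, since~(\ref{cond3}) is a stronger hypothesis that already forces the relevant ratios into $[0,1]$; I would simply remark this rather than re-impose the ordering. Finally, as with~(\ref{ineqtight2}), it is worth observing that $\beta^{j}\le\beta^{\omega_H(\overrightarrow{j})}$ for $0\le\beta\le1$, so Inequality~(\ref{pcrenpoly2}) is genuinely tighter than~(\ref{pcrenpoly0}) on states satisfying~(\ref{cond3}).
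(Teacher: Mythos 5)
Your proposal is correct and follows essentially the same route as the paper's own proof: reduce via Inequality~(\ref{qbitalppoly1}) to the numerical claim~(\ref{qbitalppoly3}), induct on $N$ with base case~(\ref{p3scpoly3}), split off the $j=0$ term, apply Inequality~(\ref{betle1}) using condition~(\ref{cond3}) to bound the ratio in $[0,1]$, and close with the induction hypothesis on the remaining $N-1$ terms. Your added observations (that the ordering assumption~(\ref{pordern2}) is superfluous here and that $\beta^{j}\leq\beta^{\omega_{H}(\overrightarrow{j})}$ makes this bound tighter) are accurate and consistent with the paper's remark following the theorem.
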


\begin{proof}
Due to Inequality~(\ref{qbitalppoly1}), it is enough to show
\begin{align}
\left(\sum_{j=0}^{N-1}E^a_{q,s}\left(\rho_{A|B_j}\right)\right)^{\beta}\leq&
\sum_{j=0}^{N-1}{\beta}^{j}\left(E^a_{q,s}\left(\rho_{A|B_j}\right)\right)^{\beta},
\label{qbitalppoly3}
\end{align}
and we use mathematical induction on $N$, Moreover, Inequality~(\ref{p3scpoly3}) assures the validity of Inequality~(\ref{qbitalpmono3})
for $N=2$.

Now, let us assume Inequality~(\ref{qbitalppoly3}) is true for any nonnegative integer less than $N$, and consider
a multi-qubit state $\rho_{AB_0 \cdots B_{N-1}}$. From the equality
\begin{align}
\left(\sum_{j=0}^{N-1}E^a_{q,s}\left(\rho_{A|B_j}\right)\right)^{\beta}
=&\left(E^a_{q,s}\left(\rho_{A|B_0}\right)\right)^{\beta}
\left(1+\frac{\sum_{j=1}^{N-1}E^a_{q,s}\left(\rho_{A|B_j}\right)}
{E^a_{q,s}\left(\rho_{A|B_0}\right)} \right)^{\beta},
\label{pnscpoly5}
\end{align}
and Inequality~(\ref{betle1}) together with the condition in Inequality~(\ref{cond3}), we have
\begin{align}
\left(\sum_{j=0}^{N-1}E^a_{q,s}\left(\rho_{A|B_j}\right)
\right)^{\beta}\leq&
\left(E^a_{q,s}\left(\rho_{A|B_0}\right)\right)^{\beta}
+\beta\left(\sum_{j=1}^{N-1}E^a_{q,s}\left(\rho_{A|B_j}\right)
\right)^{\beta}.
\label{pnscpoly6}
\end{align}

Because the summation of the right-hand side in Inequality~(\ref{pnscpoly6}) is a summation of $N-1$ terms,
thus, the induction hypothesis leads us to
\begin{align}
\left(\sum_{j=1}^{N-1}E^a_{q,s}\left(\rho_{A|B_j}\right)
\right)^{\beta}\leq
\sum_{j=1}^{N-1}{\beta}^{j-1}\left(E^a_{q,s}\left(\rho_{A|B_j}\right)\right)^{\beta}.
\label{pnscpoly7}
\end{align}
Now, Inequalities~(\ref{pnscpoly6}) and (\ref{pnscpoly7}) recover Inequality~(\ref{qbitalppoly3}),
and this completes the proof.
\end{proof}

From Inequality~(\ref{numcom}), we have $\omega_{H}\left(\overrightarrow{j}\right)\leq j$
for any nonnegative integer $j$ and its corresponding binary vector $\overrightarrow{j}$,
therefore
\begin{align}
\left(E^a_{q,s}\left(\rho_{A|B_0 \cdots B_{N-1}}\right)\right)^{\beta}  \leq&
\sum_{j=0}^{N-1}{\beta}^{j}\left(E^a_{q,s}\left(\rho_{A|B_j}\right)\right)^{\beta}\nonumber\\
\leq& \sum_{j=0}^{N-1}{\beta}^{\omega_{H}\left(\overrightarrow{j}\right)}\left(E^a_{q,s}\left(\rho_{A|B_j}\right)\right)^{\beta},
\label{ineqtight3}
\end{align}
for $0\leq \beta \leq 1$. Thus, Inequality~(\ref{pcrenpoly2}) of Theorem~\ref{tpoly2} is tighter than Inequality~(\ref{pcrenpoly0}) of
Theorem~\ref{tpoly0} for $0\leq \beta \leq 1$ and any multi-qubit state $\rho_{AB_0 B_1 \cdots B_{N-1}}$ satisfying the condition in Inequality~(\ref{cond3}).

\section{Conclusions}\label{Sec: Conclusion}
We have provided a characterization of multi-qubit entanglement monogamy and polygamy constraints in terms of non-negative power of entanglement measures based on unified entropy. Using the Hamming weight of the binary vector related with the distribution of subsystems, we have established a class of tight monogamy inequalities of multi-qubit entanglement based on the $\alpha$th-power of UE for $\alpha \geq 1$.
We have further established  a class of tight polygamy inequalities of multi-qubit entanglement in terms of the $\beta$th-power of UEoA for $0 \leq \beta \leq 1$.

Our results presented here deal with the full range of non-negative power of the most general class of bipartite entanglement measures based on unified-$(q,s)$ entropy to establish the monogamy and polygamy inequalities of multi-qubit entanglement so that our results encapsulate the results of R\'enyi and Tsallis entanglement-based multi-qubit entanglement constraints as special cases. Furthermore, our class of monogamy and polygamy inequalities hold in a tighter way, which can also provide finer
characterizations of the entanglement shareability and distribution among the multi-qubit systems. Noting the importance of the study on multi-party quantum entanglement, our result can provide a useful methodology to understand the monogamy and polygamy nature of multi-party quantum entanglement.

\section*{Acknowledgments}
This research was supported by Basic Science Research Program through the National Research Foundation of Korea(NRF)
funded by the Ministry of Education(NRF-2017R1D1A1B03034727).


\end{document}